\title{Synchronous Games, Simulations and $\lambda$-calculus}
\author{Pierre Hyvernat\inst{1,2}}
\institute{Institut math\'ematique de Luminy, Marseille, France\\
           \and
           Chalmers Institute of Technology, G\"oteborg, Sweden\\
           \email{hyvernat@iml.univ-mrs.fr}}
\let\ea=\expandafter
\def\tvi{\vrule height12pt depth5pt width0pt}
\newbox\hyp \newbox\concl \newdimen\ruleWidth
\def\rule@nd{\rule@end}
\def\rule@ux#1&{
 \def\tmp{#1}
  \ifx\tmp\rule@nd
   \hskip-2em$
    \else\tmp\hskip2em\ea\rule@ux
     \fi}
\def\infer#1#2#3{\relax
 \setbox\hyp=\hbox{$\rule@ux#1&\rule@end&} \setbox\concl=\hbox{$#2$}
  \ifdim \wd\concl<\wd\hyp
   \ruleWidth=\wd\hyp
    \else\ruleWidth=\wd\concl
     \fi
     \advance\ruleWidth by 1cm
     $\vcenter{
       \vbox{\hbox to\ruleWidth{\hss\tvi \unhbox\hyp \hss}}
        \hrule height.7pt depth0pt width\ruleWidth
         \vbox{\hbox to\ruleWidth{\hss\tvi \unhbox\concl \hss}}}$
          \kern.1cm \hbox{#3}}
\newcommand\ie{\mbox{\textit{i.e.}}~}
\newcommand\etc{\mbox{\textrm{etc.}}}
\newcommand{\Int}{\mathbf{Int}}          
\newcommand{\Mulf}{\mathcal{M}_{\!f}}    
\newcommand\D{\mathop{{}\mathrm{D}}}
\newcommand\AC{\mathsf{AC}}
\newcommand\PI{\mathsf{\Pi}^1}             
\newcommand\Bool{\mathbf{B}}
\newcommand\T{\mathsf{t}}
\newcommand\F{\mathsf{f}}
\newcommand\List{\mathsf{List}}
\newcommand\Skip{\mathsf{skip}}
\newcommand\Magic{\mathsf{magic}}
\newcommand\Abort{\mathsf{abort}}
\newcommand\LL{\mbox{\L}}
\newcommand\Perm{\mathfrak{S}}
\newcommand{\Tensor}{\mathbin{\otimes}}
\newcommand{\Bottom}{\bot}
\newcommand{\One}{\mathbf{1}}
\newcommand{\LinearArrow}{\mathbin{\multimap}}
\newcommand{\Plus}{\mathbin{\oplus}}
\DeclareSymbolFont{fontexp}{OT1}{cmr}{bx}{n}
\DeclareMathSymbol{!}{\mathalpha}{fontexp}{`!}
\newcommand\sitem[1]{\item[\small{\textit{(#1)}}]}
\newcommand\sit[1]{{\small\textit{(#1)}}}
\newcommand\be{\[\begin{array}[t]{llllllll}} \newcommand\ee{\end{array}\]}
\newcommand\bce{\[\begin{array}[t]{c}} \newcommand\ece{\end{array}\]}
\newcommand\step[2]{\smallbreak\hskip.25cm$#1$\ifx\empty#2\else\hskip.25cm$\{$ {\small #2} $\}$\fi\smallbreak}
\begin{document}

\maketitle

\begin{abstract}
  We refine a model for linear logic based on two well-known ingredients:
  games and simulations. We have already shown that usual simulation relations
  form a sound notion of morphism between games; and that we can interpret all
  linear logic in this way. One particularly interesting point is that we
  interpret multiplicative connectives by synchronous operations on games.

  We refine this work by giving computational contents to our simulation
  relations. To achieve that, we need to restrict to intuitionistic linear
  logic. This allows to work in a constructive setting, thus keeping a
  computational content to the proofs.

  We then extend it by showing how to interpret some of the additional
  structure of the exponentials.

  \smallbreak
  To be more precise, we first give a denotational model for the typed
  $\lambda$-calculus; and then give a denotational model for the differential
  $\lambda$-calculus of Ehrhard and Regnier. Both this models are proved
  correct constructively.
\end{abstract}

\section*{Introduction} 

Transition systems and simulation relations are well known tools in computer
science. More recent is the use of games to give models for different
programming languages \cite{AJM,HO,Algol}, or as an interesting tool for the
study of other programming notions \cite{ghica}. We have devised in
\cite{ISLL} a denotational model of linear logic based on those two ideas.
Basically, a formula was interpreted by an alternating transition system
(called an \emph{interaction system}) and a proof was interpreted by a
\emph{safety property} for this interaction system.  Those concepts which were
primarily developed to model imperative programming and interfaces turned out
to be a rather interesting games model: a formula is interpreted by a game
(the interaction systems), and a proof by a ``non-loosing strategy'' (the
safety property).

Part of the interest is that the notion of safety property is very simple: it
is only a subset of the set of states.  However, in terms of games, the
associated strategy (whose existence is guaranteed by the condition satisfied
by the subset of states) is usually not computable.  We will show that it is
possible to overcome this problem by restricting to intuitionistic linear
logic.  More precisely, we will model typed $\lambda$-calculus (seen as a
subsystem of intuitionistic linear logic) within a constructive setting. The
model for full intuitionistic linear logic (ILL) can easily be derived the
present work and the additive connectives defined in \cite{ISLL}.

\smallbreak
The structure of safety properties is in fact richer than the structure of
$\lambda$-terms. In particular, safety properties are closed under unions.
Since there is no sound notion of ``logical sum'' of proofs, this doesn't
reflect a logical property.  However, it is important in programming since it
can be used to interpret non-determinism. The differential $\lambda$-calculus
of Ehrhard and Regnier (\cite{diffLamb}) is an extension to the
\mbox{$\lambda$-calculus,} which has a notion of non deterministic sum. We
show how to interpret this additional structure.

\section{Interaction Systems} 

\subsection{The Category of Interaction Systems} 

We briefly recall the important definitions. For more motivations, we refer to
\cite{giovanni} and \cite{ISLL}.

\begin{definition}
  Let $S$ be a set (of \emph{states}); an \emph{interaction system} on $S$ is
  given by the following data:
  \begin{itemize}
    \item for each $s\in S$, a set $A(s)$ of \emph{possible actions;}
    \item for each $a\in A(s)$, a set $D(s,a)$ of \emph{possible reactions} to
      $a$;
    \item for each $d\in D(s,a)$, a \emph{new state} $n(s,a,d)\in S$.
  \end{itemize}
  We usually write $s[a/d]$ instead of $n(s,a,d)$.
\end{definition}
Following standard practise within computer science, we distinguish the two
``characters'' by calling them the Angel (choosing actions, hence the $A$) and
the Demon (choosing reactions, hence the $D$).  Depending on the authors'
background, other names could be Player and Opponent, Eloise and Abelard,
Alice and Bob, Master and Slave, Client and Server, System and Environment,
\etc

\medbreak
One of the original goals for interaction systems (Hancock) was to represent
real-life programming interfaces. Here is for example the interface of a stack
of booleans:
\begin{itemize}
  \item $S = \List(\Bool)$;
  \item $A(\_) = \{\mathsf{Push}(b) \mid b\in \Bool\} \cup \{\mathsf{Pop}\}$;
  \item $\left\{\begin{array}[c]{l}
           D(\_,\mathsf{Push}(b)) = \{*\} \\
           D([],\mathsf{Pop}) = \{\mathsf{error}\}\\
           D(b:s,\mathsf{Pop}) = \{*\}
         \end{array}\right.$
  \item $\left\{\begin{array}[c]{l}
           n(s,\mathsf{Push}(b)) = b:s  \\
           n([],\mathsf{Pop},\mathsf{error}) = []\\
           n(b:s,\mathsf{Pop}) = s
         \end{array}\right.$
\end{itemize}
This gives in full details the specification of the stack interface. This is
more precise than classical interfaces which are usually given by a
collection of types: compare with this poor description of stacks:
\begin{itemize}
  \item $\mathsf{Pop} : \Bool$
  \item $\mathsf{Push} : \Bool \rightarrow ()$
\end{itemize}
which doesn't specify what the command actually \emph{do}; but only tells how
they can be used.

\medbreak
The notion of morphism between such interaction systems is an extension of the
usual notion of simulation relation:
\begin{definition}
  If $w_1$ and $w_2$ are two interaction systems on $S_1$ and $S_2$
  respectively; a relation $r\subseteq S_1 \times S_2$ is called a simulation if:
  \be
    (s_1,s_2)\in r \quad \Rightarrow \quad & \big(\forall a_1\in A_1(s_1)\big)\\
                                  & \big(\exists a_2\in A_2(s_2)\big)\\
                                  & \big(\forall d_2\in D_2(s_2,a_2)\big)\\
                                  & \big(\exists d_1\in D_1(s_1,a_1)\big)\\
                                  & \qquad \big(s_1[a_1/d_1],s_2[a_2/d_2]\big)\in r \ \mbox{.}
  \ee
\end{definition}
This definition is very similar to the usual definition of simulation relation
between labelled transition systems, but adds one layer of quantifiers to deal
with reactions. That $(s_1,s_2) \in r$ means that ``$s_2$ simulates $s_1$''.
By extension, if $a_2$ is a witness to the first existential quantifier, we
say that ``$a_2$ simulates $a_1$''.  Note that the empty relation is
\emph{always} a simulation. In practise, to prevent this degenerate case, we
would add a notion of initial state(s) and require that initial states are
related through the simulation.

\smallbreak
To continue on the previous example, programming a stack interface amounts to
implementing the stack commands using a lower level interface (arrays and
pointer for examples). If we interpret the quantifiers constructively, this
amounts to providing a (constructive) proof that a non-empty relation is a
simulation from this lower level interaction system to stacks. (See
\cite{giovanni} for a more detailed description of programming in terms of
interaction systems.)

\medbreak\noindent
Recall that the composition of two relations is given by:
\[
  (s_1,s_3) \in r_2 \cdot r_1 \Leftrightarrow (\exists s_2)\ (s_1,s_2)\in r_1 \ \mbox{and}\
  (s_2,s_3)\in r_2
\]
It should be obvious that the composition of two simulations is a simulation
and that the equality relation is a simulation from any $w$ to itself.  Thus,
we can put:
\begin{definition}
  We call $\Int$ the category of interaction systems with simulations.
\end{definition}
Note that everything has a computational content: the composition of two
simulations is just given by the composition of the two ``algorithms''
simulating $w_3$ by $w_2$ and $w_2$ by $w_1$; and that the algorithm for the
identity from $w$ to $w$ is simply the ``copycat'' strategy.

\subsection{Notation} 

Before diving in the structure of interaction systems, let's detail some of
the notation.
\begin{itemize}

  \item An element of the indexed cartesian product $\prod_{a\in A} D(a)$ is
    given by a function $f$ taking any $a\in A$ to an $f(a)$ in $D(a)$. When the
    set $D(a)$ doesn't depend on $a$, it amounts to a function $f:A \rightarrow D$.

  \item An element of the indexed disjoint sum $\sum_{a\in A} D(a)$ is given
    by a pair $(a,d)$ where $a\in A$ and $d\in D(a)$. When the set $D(a)$
    doesn't depend on $a$, this is simply the cartesian product $A \times D$.

  \item We write $\List(S)$ for the set of ``lists'' over set $S$. A list is
    simply a tuple $(s_1,s_2, \ldots s_n)$ of elements of $S$. The empty list
    is denoted $()$.

  \item The collection $\Mulf(S)$ of finite multisets over $S$ is the quotient
    of $\List(S)$ by permutations. We write $[s_1,\ldots s_n]$ for the
    equivalence class containing $(s_1,\ldots s_n)$.
    We write ``$+$'' for the sum of multisets. It simply corresponds to
    concatenation on lists.
\end{itemize}
Concerning the product and sum operators, it should be noted that they have a
computational content if one works in a constructive setting: an element of
$\prod_{a\in A} D(a)$ is an algorithm with input $a\in A$ and output
$f(a)\in D(a)$; and an element of $\sum_{a\in A}D(a)$ is simply a pair as
above. This is in fact the basis of dependent type theory frameworks like
Martin-L\"of's type theory or the calculus of construction.

\smallbreak
\noindent
\textbf{Remark:} even if it was an important motivation for this work, we do
not insist too much on the ``constructive mathematics'' part.  Readers
familiar with constructive frameworks should easily see that everything makes
computational sense; and classical readers can skip the comments about
computational content.

\subsection{Constructions} 

We now define the connectives of multiplicative exponential linear logic.
With those, making $\Int$ into a denotational model of intuitionistic
multiplicative exponential linear logic more or less amounts to showing that it
is symmetric monoidal closed, with a well behaved comonad.

\subsubsection{Constant.}

A very simple, yet important interaction system is ``$\Skip$'', the
interaction system without interaction. Following the linear logic
convention, we call it $\Bottom$:
\begin{definition}
  Define $\Bottom$ (or $\Skip$) to be the following interaction system on the
  Singleton set $\{*\}$:
  \be
    A_\Bottom(*)     &=& \{*\}\\
    D_\Bottom(*,*)   &=& \{*\}\\
    n_\Bottom(*,*,*) &=& \{*\} \ \hbox{.}
  \ee
  Depending on the context, this interaction system is also denoted by $\One$.
\end{definition}
Note that it is very different from the two following interaction systems (on
the same set of states) which respectively deadlock the Angel and the Demon:
\be
 A_a(*)       &=& \emptyset &\qquad& A_d(*)      &=& \{*\} \\
 D_a(*,\_)    &=& \_        &      & D_d(*,*)    &=& \emptyset \\
 n_a(*,\_,\_) &=& \_        &      & n_d(*,*,\_) &=& \_         & \hbox{.}
\ee
Those two systems play an important r\^ole in the general theory of
interaction systems (the first one is usually called $\Abort$, while the
second one is usually called $\Magic$) but they do not appear in the model
presented below.

\subsubsection{Synchronous Product.}

There is an obvious product construction reminiscent of the synchronous
product found in SCCS (synchronous calculus of communicating systems,
\cite{SCCS}):
\begin{definition}
  Suppose $w_1$ and $w_2$ are interaction systems on $S_1$ and $S_2$.
  Define the interaction system $w_1\Tensor w_2$ on $S_1 \times S_2$ as follows:
  \be
    A_{w_1\Tensor w_2}\big((s_1,s_2)\big)                     & \quad=\quad & A_1(s_1) \times A_2(s_2)\\
    D_{w_1\Tensor w_2}\big((s_1,s_2),(a_1,a_2)\big)           & \quad=\quad & D_1(s_1,a_1) \times D_2(s_2,a_2)\\
    n_{w_1\Tensor w_2}\big((s_1,s_2),(a_1,a_2),(d_1,d_2)\big) & \quad=\quad & \big(s_1[a_1/d_1],s_2[a_2/d_2]\big) \ \mbox{.}
  \ee
\end{definition}
This is the \emph{synchronous parallel composition} of $w_1$ and $w_2$: the
Angel and the Demon exchange pairs of actions/reactions.

For any sensible notion of morphism, $\Skip$ should be a neutral element for
this product. It is indeed the case, for the following reason: the components
of $w\Tensor\Skip$ and $w$ are isomorphic by dropping the second (trivial)
coordinate:
\be
  w\Tensor \One                & &                        &      & w\\
  \noalign{\smallbreak}
  S \times \{*\}                      & &                        &\qquad& S\\
  A\big((s,*)\big)             &=& A(s) \times \{*\}           &      & A(s)\\
  D\big((s,*),(a,*)\big)       &=& D(s,a) \times \{*\}         &      & D(s,a)\\
  n\big((s,*),(a,*),(d,*)\big) &=& \big(s[a/d],*\big)     &      & s[a/d]\\
\ee
This implies trivially that $\{( (s,*),s) \mid s\in S\}$ is an isomorphism.  For
similar reasons, this product is transitive and commutative.

\begin{lemma}
  ``$\_\Tensor\_$'' is a commutative tensor product in the category~$\Int$. Its
  action on morphisms is given by:
  \[
  \big((s_1,s'_1),(s_2,s'_2)\big) \in r\Tensor r'
 \Leftrightarrow 
  \left\{\begin{array}[c]{ll}
                             &(s_1,s_2)\in r\\
           \mbox{\small and} &(s'_1,s'_2)\in r'
         \end{array}\right.
  \]
\end{lemma}
Checking that $r\Tensor r'$ is indeed a simulation is easy.

\medbreak
Note that not every isomorphism (in the category $\Int$) is of this form: is
is quite simple to find isomorphic interaction systems with non-isomorphic
components.\footnote{In the finite case, one can duplicate a command $a$ into
$a_1$ and $a_2$ to obtain sets of commands of different cardinality.}

\subsubsection{Linear Arrow.}

The definition of the interaction system $w_1 \LinearArrow w_2$ is not as obvious as the
definition of the tensor ($\Tensor$):
\begin{definition}
  If $w_1$ and $w_2$ are interaction systems on $S_1$ and $S_2$, define the
  interaction system $w_1 \LinearArrow w_2$ on $S_1 \times S_2$ as follows:
  \be
  A\big((s_1,s_2)\big) = \displaystyle
                           \sum_{f\in A_1(s_1) \rightarrow A_2(S_2)}\ 
                           \prod_{a_1\in A_1(s_1)}
                           D_2\big(s_2,f(a_1)\big) \rightarrow D_1(s_1,a_1)\\
  D\big((s_1,s_2),(f,G)\big) = \displaystyle\sum_{a_1\in A_1(s_1)} D_2\big(s_2,f(a_1)\big)\\
  n\big((s_1,s_2),(f,G),(a_1,d_2)\big) = \big(s_1[a_1/G_{a_1}(d_2)]\,,\ s_2[f(a_1)/d_2]\big)
  \ \hbox{.}
  \ee
\end{definition}
It may seem difficult to get some intuition about this interaction system; but
it is \textit{a posteriori} quite natural: (see Proposition~\ref{prop:adjoint})
\begin{itemize}
  \item An action in state $(s_1,s_2)$ is given by:
    \begin{itemize}
      \sitem1 a function $f$ (the index for the element of the disjoint sum)
      translating actions from $s_1$ into actions from $s_2$;
      \sitem2 for any action $a_1$, a function $G_{a_1}$ translating reactions
        to $f(a_1)$ into reactions to $a_1$.
    \end{itemize}
  \item A reaction to such a ``translating mechanism'' is given by:
    \begin{itemize}
      \sitem1 an action $a_1$ in $A_1(s_1)$ (which we want to simulate);
      \sitem2 and a reaction $d_2$ in $D_2(s_2,f(a_1))$ (which we want to
        translate back).
    \end{itemize}
  \item Given such a reaction, we can simulate $a_1$ by $a_2\in A_2(s_2)$
    obtained by applying $f$ to $a_1$; and translate back $d_2$ into $d_1\in
    D_1(s_1,a_1)$ by applying $G_{a_1}$ to $d_2$. The next state is just the
    pair of states $s_1[a_1/d_1]$ and $s_2[a_2/d_2]$.
\end{itemize}

It thus looks like the interaction system $w_1 \LinearArrow w_2$ is related to
simulations from $w_1$ to $w_2$. It is indeed the case:
\begin{proposition}  \label{prop:adjoint}
  In $\Int$, ``$\_\Tensor\_$'' is left adjoint to ``$\_ \LinearArrow \_$''.
\end{proposition}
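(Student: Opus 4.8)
The plan is to establish a natural bijection between simulations from $w_1 \Tensor w$ to $w_2$ and simulations from $w$ to $w_1 \LinearArrow w_2$, then check this bijection is natural in all the relevant arguments. Recall that a simulation is a relation satisfying the defining $\forall\exists\forall\exists$ condition; the adjunction will essentially fall out of the fact that the constructive data realizing the simulation condition for $w_1 \Tensor w \to w_2$ is precisely the same data needed to realize the condition for $w \to (w_1 \LinearArrow w_2)$, once we ``curry'' appropriately.

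First I would fix interaction systems $w_1$, $w$, $w_2$ on $S_1$, $S$, $S_2$ and set up the correspondence on relations: given $r \subseteq (S_1 \times S) \times S_2$, define $\hat r \subseteq S \times (S_1 \times S_2)$ by $(s, (s_1, s_2)) \in \hat r \Leftrightarrow ((s_1, s), s_2) \in r$. This is a trivial reindexing and is clearly a bijection on relations, inverse to itself up to the evident swap. The real content is the second step: showing that $r$ is a simulation $w_1 \Tensor w \to w_2$ if and only if $\hat r$ is a simulation $w \to (w_1 \LinearArrow w_2)$. I would write out both simulation conditions explicitly, unfolding the definitions of $\Tensor$ and $\LinearArrow$ from the excerpt, and exhibit how the quantifier block of one rearranges into the quantifier block of the other. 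Expanding $A_{w_1 \LinearArrow w_2}((s_1,s_2))$ gives a disjoint sum over $f$ together with the family $G$, and a reaction is a pair $(a_1, d_2)$; matching this against the action $(a_1, a)$ and reaction $(d_1, d)$ of the tensor system should make the two conditions literally interchangeable after currying $f$ and $G$ out of / into the existential on $a_2 \in A_2(s_2)$ and the choice of $d_1 \in D_1(s_1, a_1)$.

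The key computation is to verify that the next-state clauses agree under the translation, i.e. that $n_{w_1 \Tensor w}((s_1,s),(a_1,a),(d_1,d))$ paired with the $w_2$-state matches $n_{w_1 \LinearArrow w_2}((s_1,s_2),(f,G),(a_1,d_2))$ paired with the $w$-state. Here one reads off $a_2 = f(a_1)$ and $d_1 = G_{a_1}(d_2)$ from the definition of $n$ for the linear arrow, which is exactly the data produced by the simulation condition on the tensor side. I would then observe that this bijection is natural in $w$ and $w_2$ — the action of $\Tensor$ and $\LinearArrow$ on morphisms (the former given by the lemma above, the latter inherited) commutes with the reindexing $r \mapsto \hat r$ — which upgrades the bijection of hom-sets into a genuine adjunction $\Tensor \dashv \LinearArrow$.

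The main obstacle I expect is purely bookkeeping rather than conceptual: correctly tracking which quantifier is existential versus universal when one action component (the function $f$, chosen existentially on the arrow side) must be produced \emph{before} seeing the reaction $d_2$, whereas on the tensor side $a_2$ may appear to be chosen after $a_1$ is already fixed. The point is that $a_1$ is universally quantified on both sides, so the Skolemization is legitimate: the witness $a_2$ genuinely depends only on $(s_1,s_2)$ and $a_1$, matching the type $f \in A_1(s_1) \to A_2(s_2)$ exactly. Making this dependency analysis precise — and checking it respects the constructive reading so that the realizers transport both ways — is the delicate part, but once the quantifier alternation is lined up the equivalence is forced by the definitions.
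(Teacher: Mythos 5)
Your proposal takes essentially the same route as the paper's own proof: the hom-set correspondence is the trivial reindexing of relations (modulo associativity of cartesian products), and the equivalence of the two simulation conditions is obtained by Skolemizing the quantifier block --- which is exactly the paper's repeated application of the constructive axiom of choice $\AC$ to move $\forall\exists$ pairs into existentially quantified functions $f$ and $G$. Your dependency analysis (the witness $a_2$ depending only on the states and $a_1$, hence having the type of $f$) is precisely what legitimizes each $\AC$ step in the paper, so the two arguments coincide in substance.
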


\begin{proof}
  The proof is not really difficult, but is quite painful to write (or read). Here is an
  attempt.

  \smallbreak
  \noindent
  Note that the following form of the axiom of choice is constructively
  valid:\footnote{This form of the axiom of choice is provable in Martin-L\"of's type theory or in
  the calculus of construction...}
  \be
  \AC :\quad
  \big(\forall a\in A\big)\big(\exists d\in D(a)\big) \varphi(a,d)
 \Leftrightarrow 
  \big(\exists f\in\prod_{a\in A} D(a)\big)\big(\forall a\in A\big)
  \varphi\big(a,f(a)\big)
  \ee
  When the domain $D(a)$ for the existential quantifier doesn't depend on
  $a\in A$, we can simplify it into:
  \be
  \AC :\quad
  \big(\forall a\in A\big)\big(\exists d\in D\big) \varphi(a,d)
 \Leftrightarrow 
  \big(\exists f\in A \rightarrow D\big)\big(\forall a\in A\big) \varphi\big(a,f(a)\big)
  \ee

  In the sequel, the part of the formula being manipulated will be written in
  bold.  That $r$ is a simulation from $w_1\Tensor w_2$ to $w_3$ takes the
  form\footnote{modulo associativity $(S_1 \times S_2) \times S_3 \simeq S_1 \times (S_2 \times S_3)\simeq
  S_1 \times S_2 \times S_3$...}
  \be
    (s_1,s_2,s_3)\in r & \Rightarrow & \big(\forall a_1\in A_1(s_1)\big)\bm{\big(\forall a_2\in A_2(s_2)\big)}\\
                       &  & \bm{\big(\exists a_3\in A_3(s_3)\big)}\\
                       &  & \big(\forall d_3\in D_3(s_3,\bm{a_3})\big)\\
                       &  & \big(\exists d_1\in D_1(s_1,a_1)\big)\big(\exists d_2\in D_2(s_2,a_2)\big)\\
                       &  & \quad \big(s_1[a_1/d_1],s_2[a_2/d_2],s_3[\bm{a_3}/d_3]\big) \in r \ \hbox{.}
  \ee
  Using $\AC$ on the $\forall a_2\exists a_3$, we obtain:
  \be
  (s_1,s_2,s_3)\in r & \Rightarrow & \big(\forall a_1\in A_1(s_1)\big)\\
                     &  & \big(\exists f\in A_2(s_2) \rightarrow A_3(s_3)\big)\\
                     &  & \big(\forall a_2\in A_2(s_2)\big)\bm{\big(\forall d_3\in D_3(s_3,f(a_2))\big)}\\
                     &  & \big(\exists d_1\in D_1(s_1,a_1)\big)\bm{\big(\exists d_2\in D_2(s_2,a_2)\big)}\\
                     &  & \quad \big(s_1[a_1/d_1],s_2[a_2/\bm{d_2}],s_3[f(a_2)/d_3]\big) \in r \ \hbox{.}
  \ee
  We can now apply $\AC$ on $\forall d_3\exists d_2$:
  \be
  (s_1,s_2,s_3)\in r & \Rightarrow & \big(\forall a_1\in A_1(s_1)\big)\\
                     &  & \big(\exists f\in A_2(s_2) \rightarrow A_3(s_3)\big)\\
                     &  & \bm{\big(\forall a_2\in A_2(s_2)\big)}\\
                     &  & \bm{\big(\exists g\in D_3(s_3,f(a_2)) \to D_2(s_2,a_2)\big)}\\
                     &  & \big(\forall d_3\in D_3(s_3,f(a_2))\big)\\
                     &  & \big(\exists d_1\in D_1(s_1,d_1)\big)\\
                     &  & \quad \big(s_1[a_1/d_1],s_2[a_2/\bm{g}(d_3)],s_3[f(a_2)/d_3]\big) \in r
  \ee
  and apply $\AC$ one more time on $\forall a_2\exists g$ to obtain:
  \be
  (s_1,s_2,s_3)\in r & \Rightarrow & \big(\forall a_1\in A_1(s_1)\big)\\
                     &  & \big(\exists f\in A_2(s_2) \rightarrow A_3(s_3)\big)\\
                     &  & \big(\exists G\in \prod_{a_2\in A_2(s_2)} D_3(s_3,f(a_2)) \rightarrow D_2(s_2,a_2)\big)\\
                     &  & \big(\forall a_2\in A_2(s_2)\big) \ \big(\forall d_3\in D_3(s_3,f(a_2))\big)\\
                     &  & \big(\exists d_1\in D_1(s_1,d_1)\big)\\
                     &  & \quad \big(s_1[a_1/d_1],s_2[a_2/G_{a_2}(d_3)],s_3[f(a_2)/d_3]\big) \in r
  \ee
  which is equivalent to
  \be
  (s_1,s_2,s_3)\in r & \Rightarrow & \big(\forall a_1\in A_1(s_1)\big)\\
                     &  & \left(\exists (f,G) \in \begin{array}[c]{l}
                                                   \sum_{f\in A_2(s_2) \rightarrow A_3(s_3)}\\
                                                   \prod_{a_2\in A_2(s_2)}  D_3(s_3,f(a_2)) \rightarrow D_2(s_2,a_2)
                                                 \end{array}\right)\\
                     &  & \big(\forall (a_2,d_3) \in \sum_{A_2(s_2)} D_3(s_3,f(a_2))\big)\\
                     &  & \big(\exists d_1\in D_1(s_1,d_1)\big)\\
                     &  & \quad \big(s_1[a_1/d_1],s_2[a_2/G_{a_2}(d_3)],s_3[f(a_2)/d_3]\big) \in r
  \ee
  By definition, this means that $r$ is a simulation from $w_1$ to $w_2 \LinearArrow w_3$.

  \smallbreak
  \noindent
  Once more, all this formal manipulation keeps the computational content of
  the simulations. (Because $\AC$ is constructively valid.)
  \qed
\end{proof}
%
The notion of safety property from \cite{ISLL} corresponds to
simulations from $\One$ to $w$, or equivalently, subsets $x$ of $S$ such that:
\[
  s\in x  \Rightarrow  \big(\exists a\in A(s)\big)\big(\forall d\in D(s,a)\big)\ s[a/d]\in x
  \ \hbox{.}
\]
The analogy with strategies should be obvious: if $x$ is a
safety property, and $s\in x$ then the Angel has a strategy to avoid
deadlocks, starting from $s$.

\subsubsection{Multithreading.}

We now come to the last connective needed to interpret the $\lambda$-calculus.
Its computational interpretation is related to the notion of
\emph{multithreading,} \ie the possibility to run several instances of a
program in parallel. Let's start by defining synchronous multithreading in the
most obvious way:
\begin{definition}
  If $w$ is an interaction system on $S$, define $\LL(w)$, the multithreaded
  version of $w$ to be the interaction system on $\List(S)$ with:
  \be
    \LL.A\big((s_1,\ldots s_n)\big)                                           &=& A(s_1) \times \ldots A(s_n)\\
    \LL.D\big((s_1,\ldots s_n),(a_1,\ldots a_n)\big)                  &=& D(s_1,a_1) \times \ldots D(s_n,d_n)\\
    \LL.n\big((s_1,\ldots s_n),(a_1,\ldots a_n),(d_1,\ldots d_n)\big) &=& \big(s_1[a_1/d_1], \ldots s_n[a_n/d_n]\big) \ \hbox{.}
  \ee
\end{definition}
This interaction system is just an ``$n$-ary'' version of the synchronous
product. To get the abstract properties we want, we need to ``quotient''
multithreading by permutations. Just like multisets are list modulo
permutation, so is $!w$ the multithreaded $\LL(w)$ modulo permutations. This
definition is possible because $\LL(w)$ is ``compatible'' with permutations:
if $\sigma$ is a permutation, we have
\bce
 \sigma \cdot \big((s_1,\ldots s_n)\big[(a_1,\ldots a_n)/(d_1,\ldots d_n)\big]\big)\\
  =\\
  \big(\sigma \cdot (s_1,\ldots s_n)\big)[\sigma \cdot (a_1,\ldots a_n)/\sigma \cdot (d_1,\ldots d_n)]
  \ \hbox{.}
\ece
The final definition is:
\begin{definition}
  If $w$ is an interaction system on $S$, define $\LL(w)$, define $!w$ to be
  the following interaction system on $\Mulf(S)$:
  \be
  !A(\mu) &=& \sum_{\overline{s}\in\mu} \LL.A(\overline{s})\\
  !D\big(\mu,(\overline{s},\overline{a})\big) &=& \LL.D(\overline{s},\overline{a})\\
  !n\big(\mu,(\overline{s},\overline{a}),\overline{d}\big) &=& \Perm \cdot \LL.n(\overline{s},\overline{a},\overline{d}) \ \hbox{.}
  \ee
\end{definition}
Unfolded, it gives:
\begin{itemize}
  \item an action in state $\mu$ (a multiset) is given by an element $\overline{s}$ of
    the equivalence class $\mu$ (a list) together with an element $\overline{a}$ in
    $\LL.A(\overline{s})$ (a list of actions);
  \item a reaction is given by a list of reactions $\overline{d}$ in $\LL.D(\overline{s},\overline{a})$;
  \item the next state is the equivalence class containing the list
    $\overline{s}[\overline{a}/\overline{d}]$ (the orbit
    of~$\overline{s}[\overline{a}/\overline{d}]$ under the action of the group
    of permutations).
\end{itemize}

\noindent
This operation enjoys a very strong algebraic property:
\begin{proposition} \label{prop:OfCourse}
  ``\/$!\_$'' is a comonad in $\Int$.
\end{proposition}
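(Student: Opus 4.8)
The plan is to exhibit the three pieces of comonad structure on $!$---a functorial action on morphisms, a counit $\varepsilon$, and a comultiplication $\delta$---and then verify the three comonad laws. Since a morphism of $\Int$ is just a relation (the simulation clause being a \emph{property} rather than extra data), two parallel morphisms are equal exactly when they are equal as relations; hence the comonad laws will be \emph{equalities of relations}, and the only place where genuine, witness-producing work is needed is in checking that each structure map is in fact a simulation.

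First I would give the action of $!$ on morphisms. Given a simulation $r\subseteq S_1\times S_2$ from $w_1$ to $w_2$, define $!r\subseteq\Mulf(S_1)\times\Mulf(S_2)$ by declaring $(\mu_1,\mu_2)\in\,!r$ iff $\mu_1$ and $\mu_2$ admit list representatives of equal length related componentwise by $r$. To see $!r$ is a simulation, an action on $\mu_1$ is a representative $\overline{s}$ together with a list of actions; one simulates it thread by thread (using that $r$ is a simulation, and $\AC$ to collect the thread-wise witnesses into the required lists), and the compatibility of $\LL$ with permutations established above guarantees the construction descends to the quotient $\Mulf$. Functoriality $!\mathrm{id}=\mathrm{id}$ and $!(r_2\cdot r_1)=\,!r_2\cdot\,!r_1$ then follow by composing representatives componentwise.

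Next I would define the counit $\varepsilon_w=\{([s],s)\mid s\in S\}\subseteq\Mulf(S)\times S$ (dereliction: a single thread) and the comultiplication $\delta_w\subseteq\Mulf(S)\times\Mulf(\Mulf(S))$ by $(\mu,M)\in\delta_w$ iff $\sum_{\nu\in M}\nu=\mu$ (digging: any way of regrouping the threads of $\mu$ into a multiset of bundles). Checking $\varepsilon_w$ is a simulation is immediate, since a singleton multiset behaves exactly like its unique thread. Checking $\delta_w$ is a simulation is the substantive step: an action on $\mu$ is a representative list $\overline{s}$ with a list of actions, and because $\mu=\sum_{\nu\in M}\nu$ one must split $\overline{s}$ into sublists representing the individual bundles $\nu$, push the matching actions through to get an action on $!!w$, and---once the Demon replies with reactions on every inner thread---reassemble them into a single reaction list on $\mu$, so that the resulting next states are again related by $\delta_w$. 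This partition-and-reassemble bookkeeping is the heart of the argument.

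Naturality of $\varepsilon$ and $\delta$ against $!r$ unwinds to the statement that the multiset lifting of $r$ commutes with taking singletons and with summing bundles, which holds because representatives compose componentwise. Finally the three comonad laws $\varepsilon_{!w}\cdot\delta_w=\mathrm{id}_{!w}$, $\,!\varepsilon_w\cdot\delta_w=\mathrm{id}_{!w}$, and coassociativity $\delta_{!w}\cdot\delta_w=\,!\delta_w\cdot\delta_w$ reduce, after computing the relational composites, to the unit and associativity laws of the commutative monoid $(\Mulf(S),+,[\,])$: the two counit laws say that the trivial one-bundle decomposition $[\mu]$ and the all-singletons decomposition of $\mu$ each recover $\mu$, while coassociativity says the two ways of regrouping a double decomposition agree. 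The main obstacle I expect is precisely the $\delta_w$ simulation check together with coassociativity, because of the triple-nested list/multiset bookkeeping and the need to keep every witness explicit so that the construction stays constructively valid, as $\AC$ permits.
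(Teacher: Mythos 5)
Your proposal is correct, and at its core it is the same proof as the paper's: the same counit $\varepsilon_w=\big\{([s],s)\mid s\in S\big\}$, the same ``regrouping'' relation for the comultiplication, the same recognition that the only substantive work is the simulation check for $\delta_w$, and the same key observation that the equational laws are plain equalities of relations which reduce to multiset algebra --- the paper phrases this as ``finite multisets form a comonad in the category of sets and relations'', which is exactly your monoid-law remark in categorical clothing. Two differences are worth recording. First, the direction of $\delta$: the paper defines $\delta_w$ as the \emph{graph of concat}, typed $!!w \to\ !w$, and its worked example verifies the simulation property in that direction; you define the converse relation $(\mu,M)\in\delta_w \Leftrightarrow \sum_{\nu\in M}\nu=\mu$, typed $!w \to\ !!w$. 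Yours is the typing a comonad actually requires (the paper's arrow, as literally written, points the monad way, clashing with its own $\varepsilon_w:\ !w\to w$), and it is fortunate rather than automatic that both the concat graph and its converse are simulations: the action/reaction witnesses are the two inverse re-bracketing maps, so your partition-and-reassemble check is the mirror image of the paper's example. Second, you supply data the paper leaves tacit: the action of $!$ on morphisms (componentwise lifting of $r$ to multisets, made well defined on the quotient by the permutation-compatibility of $\LL$), functoriality, and naturality of $\varepsilon$ and $\delta$; the paper subsumes all of this under ``checking that the appropriate diagrams commute is immediate'' and never even defines $!r$. So your write-up is, if anything, more complete and more type-correct than the paper's; no step in it would fail.
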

\begin{proof}

  We need to find two operations:
  \begin{itemize}
   \item $\varepsilon_w : !w \rightarrow w$ defined as $\varepsilon_w = \big\{ \big([s],s\big) \mid s\in S\big\}$;
   \item and $\delta_w : !!w \rightarrow !w$ defined as the graph of the ``\texttt{concat}''
     function:
     \[\delta_w = \big\{ \big([\mu_i]_{i\in I} , \sum_{i\in I}\mu_i\big) \mid 
     \forall i\in I\  \mu_i\in\Mulf(S)\big\}\]
 \end{itemize}
 For any $w$, those operations are indeed simulations: for $\varepsilon_w$, it
 is quite obvious, and for $\delta_w$, it is quite painful to write. Let's only
 give an example from which the general case can easily be inferred:
 \begin{enumerate}
   \item we have $([ [s_1,s_2,s_3],[t_1],[] ] , [s_1,s_2,s_3,t_1]) \in \delta_w$
   \item for any command $((a_1,a_2,a_3),(b_1),())$ in state $[
     [s_1,s_2,s_3],[t_1],[] ]$, we need to find an action in
     $[s_1,s_2,s_3,t_1]$: simply take $(a_1,a_2,a_3,b_1)$;
   \item for any reaction $(d_1,d_2,d_3,e_1)$ to this action, we need to find
     a reaction to the original command, \ie to
     $((a_1,a_2,a_3),(b_1)())$: take $((d_1,d_2,d_3),(e_1),())$;
   \item the next states are respectively
     \begin{itemize}
       \item $[ [n(s_1,a_1,d_1),n(s_2,a_2,d_2),n(s_3,a_3,d_3)],[n(t_1,b_1,e_1)],[] ]$
       \item and $[n(s_1,a_1,d_1),n(s_2,a_2,d_2),n(s_3,a_3,d_3),n(t_1,b_1,e_1)]$.
     \end{itemize}
     They are indeed related through $\delta_w$.
 \end{enumerate}
 To be really precise, one would need to manipulate lists of states
 (representative of the multisets); but this only makes the proof even
 less readable.

 \smallbreak
 Checking that the appropriate diagrams commute is immediate. It only involves
 the underlying sets and relations, and not the interaction systems or
 simulation conditions. (In fact , finite multisets form a comonad in the
 category of sets and relations...)
\qed
\end{proof}

\section{Interpreting the $\lambda$-Calculus} 

We now have all the ingredients to give a denotational model for the typed
\mbox{$\lambda$-calculus}: a type $T$ will be interpreted by an interaction
system $T^*$; and a judgement ``$x_1:T_1,\dots x_n:T_n \vdash t:T$'' will be
interpreted by simulation from $!T_1^*\Tensor\dots !T_n^*$ to $T^*$.

\subsection{Typing rules}

The typing rules for the simply typed $\lambda$-calculus are given below:
\begin{enumerate}
  \item \infer{}{\Gamma \vdash x:\omega}{} if $x:\omega$ appears in $\Gamma$;

  \item \infer{\Gamma \vdash t:\omega \rightarrow \omega' & \Gamma \vdash u:\omega}{\Gamma \vdash (t)u:\omega'}{};

  \item \infer{\Gamma,x:\omega \vdash t:\omega'}{\Gamma \vdash \lambda x.t:\omega \rightarrow \omega'}{}.
\end{enumerate}
We follow Krivine's notation for the application and write ``$(t)u$'' for the
application of $t$ to $u$.

\subsection{Interpretation of Types}

We assume a set of type variables (``propositional variables''): $X, \ldots$
Nothing depend on the valuation we give to those type variables, so that we
are almost interpreting $\PI$ $\lambda$-calculus.\footnote{System-$F$ in which
all the quantifiers appear at the beginning of the term. To get an idea on how
to get a real model of system-$F$, refer to \cite{PTSecondOrder}.}

For a valuation $\rho$ from type variables to interaction systems, the
interpretation of types is defined in the usual way:
\begin{definition}
  Let $\omega$ be a type. Define the interpretation
  $\omega^*$ of $\omega$ as:
  \begin{itemize}
    \item $X^* = \rho(X)$;
    \item $(\omega \rightarrow \omega')^* = !\omega^* \LinearArrow \omega'^*$.
  \end{itemize}
\end{definition}

\subsection{Interpretation of Terms}

If $\omega$ is a type, write $|\omega|$ for the set of states of its interpretation:
\begin{itemize}
  \item $|X_i| = S_i$ (set of states of $\rho(X_i)$);
  \item $|\omega \rightarrow \omega'| = \big(\Mulf|\omega|\big) \times |\omega'|$.
\end{itemize}
A valuation is a way to interpret typed variables from the context:
\begin{definition}
  If $\Gamma=x_1:\omega_1,\ldots x_n:\omega_n$ is a context, an environment
  for~$\Gamma$ is a tuple $\gamma$ in $\Mulf|\omega_1| \times \ldots
  \Mulf|\omega_n|$. To simplify notation, we may write the tuple
  $\gamma=(\mu_1,\ldots \mu_n)$ as ``$x_1:=\mu_1,\ldots x_n:=\mu_n$''. We may
  also write $\gamma(x)$ for the projection of $\gamma$ on the appropriate
  coordinate. Sum of tuples of multisets is defined pointwise.
\end{definition}

We now interpret judgements: if we can type $\Gamma \vdash t:\omega'$ and if
$\gamma$ is an environment for $\Gamma$, the interpretation $[\![t]\!]_\gamma$ of
term $t$ in environment $\gamma$ is a subset of $|\omega|$ defined as follows:
\begin{definition}
  We define $[\![t]\!]_\gamma$ by induction on $t$:
  \begin{enumerate}
    \item if we have \infer{}{\Gamma \vdash x:\omega}{} with $x:\omega$ in
      $\Gamma$,\\
      then
      $[\![x]\!]_{\gamma} = \left\{\begin{array}[c]{lll}
                        \{s\}     & \mbox{if $\gamma(x)=[s]$ and $\gamma(y)=[]$ whenever $x\neq y$}\\
                        \emptyset & \mbox{otherwise} & \hbox{;}
                      \end{array}\right.$

    \item if we have \infer{\Gamma \vdash t:\omega \rightarrow \omega' & \Gamma \vdash u:\omega}{\Gamma \vdash 
      (t)u:\omega'}{},\\
      then
      $s\in[\![(t)u]\!]_\gamma$ iff $(\mu,s)\in[\![t]\!]_{\gamma_0}$ for some
      $\mu=[s_1,\ldots s_n] \in \Mulf|\omega|$ s.t.  $s_i\in[\![u]\!]_{\gamma_i}$
      for all $i=1,\ldots n$ and $\gamma=\gamma_0+\gamma_1+\ldots\gamma_n$;


    \item if we have \infer{\Gamma,x:\omega \vdash t:\omega'}{\Gamma \vdash \lambda x.t:\omega \rightarrow \omega'}{},\\
      then
      $[\![\lambda x.t]\!]_\gamma = \{ (\mu,s) \mid \mu\in\Mulf|\omega|,\,
      s\in[\![t]\!]_{\gamma,x:=\mu}\}$.
  \end{enumerate}
\end{definition}
It is immediate to check that this definition is well formed.

\medbreak
If $\Gamma =x_1:\omega_1,\ldots x_n:\omega_n$, write $!\Gamma$ for
$!\omega_1^* \Tensor \ldots !\omega_n$; similarly, we omit the superscript
$\_^*$ and write $\omega$ for $\omega^*$. The interpretation of terms is
correct in the following sense:

\begin{proposition} \label{prop:correct}
  Suppose that $\Gamma \vdash t:\omega'$, then the relation ``$\_ \in [\![t]\!]_{\_}$''
  is a simulation relation from $!\Gamma$ to $\omega'$.
\smallbreak \noindent
  In other words, if $s\in[\![t]\!]_{\gamma}$, then $s$ (in $\omega'$) simulates
  $\gamma$ (in $!\Gamma$).
\end{proposition}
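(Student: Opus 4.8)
The statement is that membership in $[\![t]\!]_{\_}$ defines a simulation from $!\Gamma$ to $\omega'$. The plan is to proceed by induction on the typing derivation of $\Gamma\vdash t:\omega'$, following exactly the three-case structure used to define $[\![t]\!]_\gamma$. For each case, I would unfold the simulation condition with the states of $!\Gamma$ (tuples of multisets) on the left and $\omega'$ on the right, and exhibit, constructively, the required witnesses: given an action of $!\Gamma$ at $\gamma$ and the simulating action's reactions, produce a simulating action in $\omega'$ and reactions pulled back to $!\Gamma$, landing in the relation at the successor states. The key organising idea is that the clauses of the definition were \emph{designed} to match the $A/D/n$ data of the connectives $!$, $\Tensor$ and $\LinearArrow$, so in each case the witness we must supply is essentially read off from the definition rather than invented.

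\textbf{Variable case.} Here $[\![x]\!]_\gamma=\{s\}$ exactly when $\gamma(x)=[s]$ and all other $\gamma(y)=[]$. The nontrivial content is at a singleton multiset in the $x$-component and empty multisets elsewhere; an action of $!\Gamma$ there is, by the definition of $!$, a choice of representative list together with a list of actions, which for a singleton reduces to a single action $a\in A(s)$, and the other components contribute nothing. One checks directly that $a$ itself simulates this action in $\omega=\rho(X)$, and that the successor multiset is again a singleton $[s[a/d]]$ related to $s[a/d]$. This is the \emph{copycat} simulation through $!$, so it goes through immediately.

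\textbf{Abstraction case.} Here $[\![\lambda x.t]\!]_\gamma=\{(\mu,s)\mid s\in[\![t]\!]_{\gamma,x:=\mu}\}$, and $(\omega\rightarrow\omega')^{*}=\,!\omega^{*}\LinearArrow\omega'^{*}$. This is precisely the adjunction of Proposition~\ref{prop:adjoint}: asking that $(\mu,s)\in[\![t]\!]_\gamma$ define a simulation from $!\Gamma$ to $!\omega\LinearArrow\omega'$ is, after transport across the adjunction $\_\Tensor\_\dashv\_\LinearArrow\_$, the same as asking that $s\in[\![t]\!]_{\gamma,x:=\mu}$ define a simulation from $!\Gamma\Tensor\,!\omega$ to $\omega'$, which is the inductive hypothesis for $t$ under the extended context $\Gamma,x:\omega$. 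So this case is a direct appeal to the adjoint correspondence, and the constructive content is carried along because $\AC$ is constructively valid.

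\textbf{Application case.} This is where I expect the main obstacle. Here $s\in[\![(t)u]\!]_\gamma$ requires decomposing $\gamma=\gamma_0+\gamma_1+\cdots+\gamma_n$ and finding $\mu=[s_1,\dots,s_n]$ with $(\mu,s)\in[\![t]\!]_{\gamma_0}$ and each $s_i\in[\![u]\!]_{\gamma_i}$. Given an Angel action of $!\Gamma$ at $\gamma$, the difficulty is that the definition of $!$ forces a choice of list representative and a splitting of the combined action-list along the decomposition $\gamma_0+\cdots+\gamma_n$; I must distribute the $!\Gamma$-action coherently to the $t$-part (at $\gamma_0$) and the $n$ copies of the $u$-part (at the $\gamma_i$), invoke the two inductive hypotheses to obtain simulating actions and pull-back reaction maps, then recombine them through the $\LinearArrow$ structure of $[\![t]\!]$ to produce the simulating action at $s$. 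The bookkeeping of multiset splitting, the action of permutations in $!n$, and matching up the reactions across the $\LinearArrow$ connective is the genuinely technical part; the paper's own remarks (``quite painful to write'') confirm that the obstacle is notational rather than conceptual, and everything remains constructive since only $\AC$ and explicit pairing/projection are used.
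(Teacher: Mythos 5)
Your proposal follows essentially the same route as the paper's own proof: induction on the typing derivation, with the variable case handled by a copycat simulation through ``$!$'', the abstraction case reduced immediately to the adjunction of Proposition~\ref{prop:adjoint}, and the application case done by splitting the $!\Gamma$-action along the decomposition $\gamma=\gamma_0+\gamma_1+\cdots+\gamma_n$, invoking the two inductive hypotheses, recombining through the $\LinearArrow$/adjunction structure and composing the resulting simulations. The paper likewise treats the application case as pure (painful) bookkeeping and only works it out on a representative example, so your assessment of where the technical weight lies is accurate.
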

This is quite surprising because the interpretation of $t$ doesn't depend on
the interaction systems used to interpret the types but only the underlying
set of states.\footnote{The interpretation is called the relational
interpretation: it can be defined in the category of sets and relations...}

\begin{proof}
  We work by induction on the structure of the type inference.
  \begin{enumerate}

    \item Axiom: it amount to showing that $\{([],\ldots [],[s],[],\ldots [],
      s) \mid s\in|\omega|\}$ is a simulation from $!\Gamma$ to $\omega$.
      This is easy: the only actions available in state
      $([],\ldots[s],[],\ldots)$ are of the form $((),\ldots(a),()\ldots)$
      where $a\in A(s)$, and they are simulated by the action $a$.  The
      reaction $d$ is translated back into reaction
      $((),\ldots,(d),(),\ldots)$; and the rest is obvious.

\medbreak

    \item Application: suppose we have $s\in [\![(t)u]\!]_\gamma$. By definition,
      we know that we have $(\mu,s)\in[\![t]\!]_{\gamma_0}$ for some
      $\mu=[s_1,\ldots s_n]$ s.t. each $s_i$ is in $[\![u]\!]_{\gamma_i}$ for a
      partition $\gamma = \gamma_0+\gamma_1+\ldots \gamma_n$.

      By induction hypothesis, we thus know that $(\mu,s)$ (in
      $\omega \rightarrow \omega'$) simulates $\gamma_0$ (in $!\Gamma$); and that any
      $s_i$ (in $\omega$) simulates $\gamma_i$ (in $!\Gamma$).

      \smallbreak
      Rather than doing the full formal proof (which involves many indices),
      we'll show how it works on an example. The general case can easily be
      deduced from that.

      Suppose $\Gamma$ is reduced to a single assumption $x:\nu$ so that $\gamma$
      is reduced to a single multiset, $[v_1,v_2,v_3]$ for our example.
      Suppose $s\in[\![(t)u]\!]_{x:=[v_1,v_2,v_3]}$ because:
      \begin{itemize}
        \item $([t_1,t_2],s)\in[\![t]\!]_{x:=[v_2]}$
        \item $t_1\in[\![u]\!]_{x:=[v_1,v_3]}$ and
          $t_2\in[\![u]\!]_{x:=[]}$.
      \end{itemize}
      We need to show that $s$ simulates $[v_1,v_2,v_3]$:
      \begin{enumerate}
        \item suppose $a_1\in A_\nu(v_1)$, $a_2\in A_\nu(v_2)$ and $a_3\in
          A_\nu(v_3)$;
        \item we need to find an action in $A_{\omega'}(s)$ simulating
          $(a_1,a_2,a_3)$:
          \begin{itemize}
             \sitem1 by induction hypothesis, $t_1$ simulates $[v_1,v_3]$,
               so that we can find an action $b_1\in A_{\omega}(t_1)$ simulating
               $(a_1,a_3)$;

             \sitem2 similarly, $t_2$ simulates $[]$, so that we can find
               an action $b_2\in A_{\omega}(t_2)$ simulating $()$;

             \sitem3 we also have that $([t_1,t_2],s)$ (in
             $\omega \rightarrow \omega'$) simulates $[v_2]$ (in $!\nu$). By
             proposition~\ref{prop:adjoint}, this is equivalent to saying that
             $s$ (in $\omega'$) simulates $([v_2],[t_1,t_2])$ (in
             $!\nu\Tensor !\omega$).

             Thus, we can find an action $a\in A_{\omega'}(s)$ simulating
             $\big((a_2),(b_1,b_2)\big)$.

             By composing the above two simulations on the right ($(b_1,b_2)$
             simulates $(a_1,a_3)$), we thus obtain that $a$ simulates
             $(a_1,a_2,a_3)$.

          \end{itemize}
          We now need to translate the reactions back: let $d\in D_{\omega'}(s,a)$,

          \begin{itemize}
            \sitem3 by induction, we can translate $d$ into a reaction
            $\big((d_2),(e_1,e_2)\big)$ to $\big((a_2),(b_1,b_2)\big)$;

            \sitem2 we can translate $e_2$ into a reaction $()$ to $b_2$;

            \sitem1 and finally we can translate $e_1$ into a reaction $(d_1,d_3)$ to
            $(a_1,a_3)$.

          \end{itemize}
          Thus, we obtain reactions $d_1\in D_\nu(v_1,a_1)$, $d_2\in
          D_\nu(v_2,a_2)$ and
          $d_3\in D_\nu(v_3,a_3)$.

        \item The new states we get from those actions/reactions are: $s[a/d]$
          on one side; and $[v_1[a_1/d_1],v_2[a_2/d_2],v_3[a_3/d_3] ]$ on the
          other side.  They are indeed related because:
          \begin{itemize}
            \sitem1 $t_1[b_1/e_1] \in [\![u]\!]_{x:=[v_1[a_1/d_1],v_3[a_3/d_3] ]}$;

            \sitem2 $t_2[b_2/e_2] \in [\![u]\!]_{x:=[]}$;

            \sitem3 and finally $[t_1[b_1/e_1],t_2[b_2/e_2] ] \in [\![t]\!]_{x:=v_2[a_2/d_2]}$.
           \end{itemize}
      \end{enumerate}

\medbreak

    \item Abstraction: this is immediate. Suppose $(\mu,s)\in[\![\lambda
      x.t]\!]_{\gamma}$; we need to show that $(\mu,s)$ (in
      $\omega \rightarrow \omega'$) simulates $\gamma$ (in $!\Gamma$). By proposition
      \ref{prop:adjoint}, this is equivalent to showing that $s$ (in
      $\omega'$) simulates $(\gamma,\mu)$ (in $!\Gamma\Tensor!w$). This is
      exactly the induction hypothesis.

    \qed
  \end{enumerate}
\end{proof}
To summarise all this, here is a tentative rewording of the above: if $\Gamma
 \vdash t:\omega$,
\smallbreak
\bgroup
  \sl
  \sit1 each type represent a process;

  \sit2 each process in the context can be run in parallel multiple times;

  \sit3 the environment $\gamma$ represents the initial states for the
  context;

  \sit4 if $s \in [\![t]\!]_\gamma$ then $s$ can be used as an initial state to
  simulate $\gamma$;

  \sit5 the algorithm for the simulation is contained in $t$.
\egroup

\bigbreak
To finish the justification that we have a denotational model, we now need
to check that the interpretation is invariant by $\beta$-reduction.
\begin{proposition} \label{prop:red_inv}
  For all terms $t$ and $u$ and environment $\gamma$, we have
  \[
    [\![(\lambda x.t)u]\!]_\gamma = [\![ t[u/x] ]\!]_\gamma
    \ \hbox{.}
  \]
\end{proposition}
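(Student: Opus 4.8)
The plan is to reduce the statement to a single \emph{substitution lemma} and then prove that lemma by induction on the term $t$.

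First I would unfold the left-hand side with the clauses for application and abstraction. By the application clause, $s\in[\![(\lambda x.t)u]\!]_\gamma$ means that there are a multiset $\mu=[s_1,\ldots s_n]$ and a decomposition $\gamma=\gamma_0+\gamma_1+\cdots+\gamma_n$ with $(\mu,s)\in[\![\lambda x.t]\!]_{\gamma_0}$ and $s_i\in[\![u]\!]_{\gamma_i}$ for every $i$; and by the abstraction clause $(\mu,s)\in[\![\lambda x.t]\!]_{\gamma_0}$ is exactly $s\in[\![t]\!]_{\gamma_0,x:=\mu}$. Hence the proposition is equivalent to the identity
\[
  [\![t[u/x]]\!]_\gamma = \big\{\, s \mid \exists\,\mu=[s_1,\ldots s_n],\ \gamma=\gamma_0+\cdots+\gamma_n,\ s\in[\![t]\!]_{\gamma_0,x:=\mu},\ s_i\in[\![u]\!]_{\gamma_i}\,\big\}\ \hbox{.}
\]
I take the right-hand side as the content of a substitution lemma and prove this equality by induction on the structure of $t$, universally in $u$ and $\gamma$. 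The base cases are the two variable clauses. If $t=x$ then $t[u/x]=u$, and $[\![x]\!]_{\gamma_0,x:=\mu}$ is non-empty only when $\mu=[s]$ and $\gamma_0$ is everywhere empty, which collapses the right-hand side to $[\![u]\!]_\gamma$. If $t=y$ with $y\neq x$ then $t[u/x]=y$, and $[\![y]\!]_{\gamma_0,x:=\mu}$ forces $\mu=[]$, so $n=0$ and the right-hand side reduces to $[\![y]\!]_\gamma$; both match the left-hand side.

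The application case $t=(t_1)t_2$ is the crux. Here $t[u/x]=(t_1[u/x])(t_2[u/x])$, and unfolding the application clause and then applying the induction hypothesis to $t_1$ and to each copy of $t_2$ produces a \emph{nested} family of decompositions: one split of $\gamma$ to share resources between $t_1$ and the copies of $t_2$, and, inside each branch, a further split supplying the copies of $u$. Conversely, the right-hand side for $(t_1)t_2$ first allocates all copies of $u$ into a single $\mu$ and all resources into $\gamma_0+\cdots+\gamma_n$, and only then splits $(\gamma_0,x:=\mu)$ for the application. Matching these two bracketings is pure bookkeeping, but it is the main obstacle: it rests entirely on associativity and commutativity of the multiset sum ``$+$'' — to regroup the copies of $u$ demanded by $t_1$ with those demanded by the several copies of $t_2$, and likewise their resource environments — together with a reindexing of the disjoint families. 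This ``flattening'' step is precisely the coherence underlying the comonad laws of Proposition~\ref{prop:OfCourse} (multisets of multisets concatenating to multisets), and I would isolate it as a separate sublemma to keep the indices manageable.

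Finally, the abstraction case $t=\lambda y.t_1$ (with $y$ chosen distinct from $x$ and not free in $u$) is handled by unfolding the abstraction clause, applying the induction hypothesis to $t_1$ under the environment $\gamma,y:=\rho$, and then observing that each copy $s_i\in[\![u]\!]_{\gamma_i}$ forces the $y$-component of $\gamma_i$ to be empty. This uses a routine relevance property of the interpretation (an easy induction: $[\![u]\!]_\delta=\emptyset$ unless $\delta$ is empty on every variable not free in $u$). Consequently the whole $y$-resource $\rho$ remains in the branch feeding $t_1$, the coordinates $x$ and $y$ may be reordered freely, and the right-hand side for $\lambda y.t_1$ falls out directly. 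Combining the three cases closes the induction and yields the claimed equation.
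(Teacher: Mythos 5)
Your proposal is correct, and it follows essentially the route the paper itself indicates: the paper omits this proof (stating only that it ``works by induction'' and deferring details to the author's webpage), and your reduction to a substitution lemma proved by induction on $t$ --- with the multiset regrouping via associativity/commutativity of ``$+$'' in the application case, and the support/relevance lemma handling the bound variable in the abstraction case --- is exactly the standard argument for this relational-style interpretation. No gaps.
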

The proof works by induction and is neither really difficult nor very
interesting. It can be found on
\texttt{http://iml.univ-mrs.fr/$\sim$hyvernat/academics.html}.

\section{Interpreting the Differential $\lambda$-calculus} 

Simulation relations from $w$ to $w'$ enjoy the additional property that they
form a complete sup-lattice:
\begin{lemma} \label{lem:supLattice}
  The empty relation is always a simulation from any $w$ to $w'$; and if
  $(r_i)_{i\in I}$ is a family of simulations from $w$ to $w'$, then
  $\bigcup_{i\in I} r_i$ is also a simulation from $w$ to $w'$.
\end{lemma}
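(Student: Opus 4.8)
The plan is to prove both claims directly from the definition of simulation. Recall that a relation $r \subseteq S_1 \times S_2$ is a simulation from $w$ to $w'$ exactly when, for every $(s,s') \in r$, we have the quantifier chain $\forall a\in A(s)\ \exists a'\in A'(s')\ \forall d'\in D'(s',a')\ \exists d\in D(s,a)$ such that $(s[a/d],s'[a'/d']) \in r$. Both assertions follow by chasing this statement, and since the statement is vacuous or monotone in the right places, neither requires constructing witnesses where none exist.

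For the empty relation, the verification is immediate: the implication $(s,s')\in\emptyset \Rightarrow \ldots$ has a false hypothesis for every pair, so it holds trivially. There is nothing to check, which is why the paper has already noted this fact twice (once after the definition of simulation, once in the $\varepsilon_w$/$\delta_w$ discussion). So the first sentence of the lemma needs essentially no work.

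For the union, suppose $(s,s') \in \bigcup_{i\in I} r_i$. Then there is some index $j\in I$ with $(s,s')\in r_j$. First I would fix an arbitrary action $a \in A(s)$. Since $r_j$ is a simulation, I obtain an action $a' \in A'(s')$ that simulates $a$ with respect to $r_j$: for every reaction $d' \in D'(s',a')$ there is a reaction $d \in D(s,a)$ with $(s[a/d],s'[a'/d']) \in r_j$. Now the key observation is that $r_j \subseteq \bigcup_{i\in I} r_i$, so the very same witnesses $a'$ and $d$ show that $(s[a/d],s'[a'/d']) \in \bigcup_{i\in I} r_i$. Thus the same $a'$ witnesses the simulation condition for the union at the pair $(s,s')$, and since $a$ was arbitrary the condition holds. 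This shows $\bigcup_{i\in I} r_i$ is a simulation.

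The only point requiring any care — and the closest thing to an obstacle — is the choice of the index $j$ and the implicit use of choice if one wants to track computational content, which the paper cares about. Given $(s,s')$ in the union, picking $j$ is a disjunction over $I$; constructively this is data carried by the membership proof, and the witness $a'$ (and the back-translation of reactions) is then produced by the simulation data of $r_j$. So the computational reading is: the strategy for the union consults which component placed $(s,s')$ in the union and then runs that component's strategy. No compatibility between different $r_i$ is needed, because we never switch indices within a single play — the monotonicity $r_j \subseteq \bigcup_i r_i$ lets the resulting state stay inside the union without revisiting the choice. Together with the empty-relation case, this establishes that simulations from $w$ to $w'$ are closed under arbitrary unions and contain a least element, hence form a complete sup-lattice.
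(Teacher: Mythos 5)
Your proof is correct and matches the paper's intent exactly: the paper simply declares this lemma ``immediate,'' and your vacuous-hypothesis argument for the empty relation plus the pick-an-index-$j$-and-use-monotonicity argument for the union is precisely the routine verification being elided. Your remark on the constructive reading (the membership proof carries the index $j$) is also in the spirit of the paper's concerns, so nothing needs to change.
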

The proof is immediate...

\medbreak
Unfortunately, this doesn't reflect any property of $\lambda$-terms. The
reason is that \sit1 not every type is inhabited, and \sit2 we do not see
\textit{a priori} how to take the union of two terms. For example, what is the
meaning of $\lambda x\lambda y.x \cup \lambda x\lambda y.y$ in the type
$X \rightarrow X \rightarrow X$?\footnote{In terms of usual datatypes translation, this term would
be $\T\cup\F$ in the type $\Bool$.}

Ehrhard and Regnier's \emph{differential $\lambda$-calculus} (\cite{diffLamb})
extends the $\lambda$-calculus by adding a notion of differentiation of
$\lambda$-terms. One consequence is that we need to have a notion of sum of
arbitrary terms, interpreted as a non-deterministic choice. It is not the right
place to go into the details of the differential $\lambda$-calculus and we
refer to \cite{diffLamb} for motivations and a complete description.

\smallbreak
\noindent
In the typed case, we have the following typing rules:
\begin{enumerate}
  \item \infer{}{\Gamma \vdash 0:\omega}{} and \infer{\Gamma \vdash t:\omega & \Gamma \vdash u:\omega}{\Gamma \vdash t+u:\omega}{};

  \item \infer{\Gamma \vdash t:\omega \rightarrow \omega' & \Gamma \vdash u:\omega}{\Gamma \vdash \D t \cdot u : \omega \rightarrow \omega'}{}.
\end{enumerate}
The intuitive meaning is that ``$\D t \cdot u$'' is the result of
(non-deterministically) replacing \emph{exactly one occurrence} of the first
variable of $t$ by $u$. We thus obtain a sum of terms, depending on which
occurrence was replaced.
This gives a notion of differential substitution (or linear substitution)
which yields a \emph{differential-reduction.} The rules governing this
reduction are more complex than usual \mbox{$\beta$-reduction} rules.  We refer to
\cite{diffLamb} for a detailed description.

\smallbreak
\noindent
We extend the interpretation of terms in the following way:
\begin{definition}
  Define the interpretation of a typed differential $\lambda$-term by
  induction on the type inference:
  \begin{enumerate}

    \item if we have \infer{}{\Gamma \vdash 0:\omega}{},
      then we put $[\![0]\!]_\gamma = \emptyset$;

    \item if we have \infer{\Gamma \vdash t:\omega & \Gamma \vdash u:\omega}{\Gamma \vdash 
      t+u:\omega}{},\\
      then we put $[\![t+u]\!]_\gamma = [\![t]\!]_\gamma \cup [\![u]\!]_\gamma$;

    \item if we have \infer{\Gamma \vdash t:\omega \rightarrow \omega' & \Gamma \vdash u:\omega}{\Gamma \vdash \D
      t \cdot u : \omega \rightarrow \omega'}{},\\
      then we put
      $(\mu,s') \in [\![\D t \cdot u]\!]_{\gamma}$ iff $(\mu+[s],s') \in
      [\![t]\!]_{\gamma_1}$ for some $s\in[\![u]\!]_{\gamma_2}$ s.t.
      $\gamma=\gamma_1+\gamma_2$.
  \end{enumerate}
\end{definition}

\noindent
Proposition \ref{prop:correct} extends as well:
\begin{proposition}
  Suppose that $\Gamma \vdash t:\omega'$ where $\Gamma$ is a context and $t$ a
  differential $\lambda$-term. The relation ``$\_ \in [\![t]\!]_{\_}$'' is a
  simulation relation from $!\Gamma$ to $\omega'$.
\end{proposition}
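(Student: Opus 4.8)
The plan is to extend the induction from Proposition~\ref{prop:correct} by treating the three new term constructors ($0$, $t+u$, and $\D t \cdot u$) as additional cases, since the old cases (axiom, application, abstraction) go through unchanged. For the first two cases I would lean entirely on Lemma~\ref{lem:supLattice}: the interpretation $[\![0]\!]_\gamma = \emptyset$ is a simulation because the empty relation is always a simulation, and $[\![t+u]\!]_\gamma = [\![t]\!]_\gamma \cup [\![u]\!]_\gamma$ is a simulation because by induction hypothesis each of $[\![t]\!]$ and $[\![u]\!]$ is a simulation and simulations are closed under unions. These two cases are essentially immediate.

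The genuine work is the differentiation case $\D t \cdot u$. Here I would unfold the definition: $(\mu,s') \in [\![\D t \cdot u]\!]_{\gamma}$ means there is a splitting $\gamma = \gamma_1 + \gamma_2$ and a state $s \in [\![u]\!]_{\gamma_2}$ with $(\mu+[s],s') \in [\![t]\!]_{\gamma_1}$. By the induction hypotheses, $(\mu+[s],s')$ simulates $\gamma_1$ through $t$, and $s$ simulates $\gamma_2$ through $u$. What I must show is that $(\mu,s')$ (in $\omega \LinearArrow \omega'$, i.e.\ $\omega \rightarrow \omega'$) simulates $\gamma = \gamma_1 + \gamma_2$ (in $!\Gamma$). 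The strategy is to transpose using Proposition~\ref{prop:adjoint}: rather than simulating into the arrow, it suffices to show that $s'$ (in $\omega'$) simulates $(\gamma, \mu)$ in $!\Gamma \Tensor !\omega$, and correspondingly the induction hypothesis on $t$ gives that $s'$ simulates $(\gamma_1, \mu+[s])$ in $!\Gamma \Tensor !\omega$. So the reduction is to reconcile the multiset $\mu + [s]$ (where the extra thread $s$ has been absorbed into the bang of $\omega$) with the splitting in which $s$ is instead produced by running $u$ on $\gamma_2$.

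Concretely I would argue as in the application case of Proposition~\ref{prop:correct}, running threads in parallel: an Angel move on $(\gamma,\mu)$ consists of moves on the $\gamma_1$ part, the $\gamma_2$ part, and the $\mu$ part. The moves on $\gamma_2$ and on the distinguished thread $s$ are handled by the simulation coming from $u$ (which translates moves on $\gamma_2$ into a move on $s$), and feeding that as the extra component of the $\mu+[s]$ input lets the simulation coming from $t$ translate everything into a move on $s'$. Reactions are then translated back along the same two simulations in reverse, and one checks the resulting next states are related by exhibiting the new splitting $\gamma_1' + \gamma_2'$ and the new intermediate state $s[\cdot]$, exactly mirroring how the application case exhibited the partition of the environment. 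The main obstacle is bookkeeping: the state $s$ lives both as a thread inside the $!\omega$ component (from $t$'s point of view) and as the target of $u$'s simulation, and one must carefully route the extra thread through the composition so that the two induction hypotheses plug together. As in the rest of the paper this is routine but index-heavy, so I would present it on the single-assumption example $\Gamma = x:\nu$ and remark that the general case follows by the same pattern.
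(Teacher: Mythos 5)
Your proposal is correct and follows essentially the same route as the paper: the $0$ and $t+u$ cases are discharged by Lemma~\ref{lem:supLattice}, and the $\D t \cdot u$ case is handled by unfolding the definition, transposing through Proposition~\ref{prop:adjoint}, using the induction hypothesis on $u$ to turn the $\gamma_2$-moves into a move on $s$, feeding that move into the $\mu+[s]$ slot of the simulation given by the induction hypothesis on $t$, and translating reactions back in the reverse order. The only (immaterial) difference is presentational: the paper splits $\gamma$ into $(\gamma_1,\gamma_2)$ explicitly, working in $!\Gamma\Tensor!\Gamma\Tensor!\omega$, and carries out the differentiation case abstractly rather than on a one-assumption example.
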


\begin{proof}
  The proof for the sum and the $0$ are contained in proposition
  \ref{lem:supLattice}.

  \smallbreak\noindent
  For differentiation, suppose we have $(\mu,s')\in[\![\D t \cdot u]\!]_\gamma$, \ie
  $(\mu+[s],s')\in [\![t]\!]_{\gamma_1}$ for some $s\in[\![u]\!]_{\gamma_2}$, with
  $\gamma=\gamma_1+\gamma_2$.
  We need to show that $(\mu,s')$ (in $\omega \rightarrow \omega'$) simulates $\gamma$ (in
  $!\Gamma$). Since $\gamma=\gamma_1+\gamma_2$, it is enough to show that we
  can simulate $(\gamma_1,\gamma_2)$ (in $!\Gamma\Tensor!\Gamma$).
  \\
  By proposition~\ref{prop:adjoint}, this is equivalent to showing that $s'$
  (in~$\omega'$) simulates $(\gamma_1,\gamma_2,\mu)$
  (in~$!\Gamma\Tensor!\Gamma\Tensor!\omega$).
  \\
  Let $a_{\gamma_1}\in !A_\Gamma(\gamma_1)$, $a_{\gamma_2}\in
  !A_\Gamma(\gamma_2)$ and $a_\mu\in!A_\omega(\mu)$; we need to find an action
  in $A_{\omega'}(s')$ to simulate $(a_{\gamma_1},a_{\gamma_2},a_\mu)$:
  \begin{itemize}

    \sitem1 by induction hypothesis, we know that $s$ (in $\omega$) simulates
    $\gamma_2$ (in $!\Gamma$); so that we can find an action $a\in
    A_\omega(s)$ simulating $a_{\gamma_2}$;

    \sitem2 by induction, we know that $s'$ (in $\omega'$)
    simulates $(\gamma_1,\mu+[s])$ (in $!\Gamma\Tensor!\omega$), so that we
    can find an action $a'\in A_{\omega'}(s')$ simulating
    $\big(a_{\gamma_1},(a_\mu,a)\big)$.

    Since $a$ simulates $a_{\gamma_2}$, by composition, $a$ simulates
    $\big(a_{\gamma_1},(a_\mu,a_{\gamma_2})\big)$; and by associativity and
    commutativity, we can thus simulate $(a_{\gamma_1},a_{\gamma_2},a_\mu)$.

  \end{itemize}
  To translate back a reaction $d'$ to $a'$ into a reaction
  $(d_{\gamma_1},d_{\gamma_2},d_\mu)$, we proceed similarly:
  \begin{itemize}

    \sitem2 by induction, we can translate $d'$ into a reaction
    $(d_{\gamma_1},d_\mu,d)$ to $\big(a_{\gamma_1},(a_\mu,a)\big)$;

    \sitem1 by induction, we can also translate the reaction $d$ (in
    $D_\omega(s,a)$) into a reaction $d_{\gamma_2}$ (in
    $!D_{\Gamma}(s,a_{\gamma_2})$).

  \end{itemize}
  We thus obtain reactions $d_{\gamma_1}$, $d_{\gamma_2}$ and $d_\mu$ as
  desired. That the resulting next states are still related is quite
  obvious...  \qed
\end{proof}

We now need to check that the interpretation is invariant by $\beta$-reduction
and differential reduction.
\begin{proposition}
  For all differential terms $t$ and $u$ and environment $\gamma$, we have:
  \be
    [\![ (\lambda x.t)u ]\!]_\gamma    &=& [\![ t[u/x] ]\!]_\gamma \cr
    [\![ \D(\lambda x.t) \cdot u ]\!] _\gamma &=& [\![ \lambda x \ .\  (\partial t/\partial x) \cdot u ]\!]_\gamma
  \ee
\end{proposition}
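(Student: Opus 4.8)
The plan is to derive each of the two equations from a corresponding \emph{substitution lemma}, after which the reduction rule itself is obtained by mechanically unfolding the clauses for application, abstraction and $\D\,\_\cdot\_$.

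First I would prove an ordinary substitution lemma by induction on $t$, now covering all the constructors of the differential calculus (so $0$, $t+u$ and $\D t\cdot u$ in addition to the three pure cases), under the usual hygiene assumption $x\notin\mathrm{FV}(u)$. In the multiset reading it should state that $s\in[\![t[u/x]]\!]_\gamma$ holds precisely when there exist a multiset $[s_1,\ldots s_n]$ and a splitting $\gamma=\gamma_0+\gamma_1+\cdots+\gamma_n$ with $s\in[\![t]\!]_{\gamma_0,\,x:=[s_1,\ldots s_n]}$ and $s_i\in[\![u]\!]_{\gamma_i}$ for each $i$: substituting $u$ for $x$ amounts to feeding one independent run of $u$ into each occurrence of $x$ recorded by the $x$-component of the environment. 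The variable and $0$ cases are immediate, the $+$ case follows by distributing the union, and the application, abstraction and $\D\,\cdot\,$ cases reduce to matching the splitting of $\gamma$ with the splitting already present in the relevant clause of the interpretation. Granting this, the first equation is immediate: unfolding $[\![(\lambda x.t)u]\!]_\gamma$ through the application clause and then the abstraction clause yields exactly the right-hand side of the substitution lemma, hence $[\![t[u/x]]\!]_\gamma$. This is the analogue, for the extended syntax, of Proposition~\ref{prop:red_inv}.

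For the differential reduction I would then prove a \emph{differential substitution lemma} by induction on $t$, following Ehrhard and Regnier's inductive definition of $\partial t/\partial x$ and keeping the standing assumption $x\notin\mathrm{FV}(u)$. The target statement is that $r\in[\![(\partial t/\partial x)\cdot u]\!]_{\gamma,\,x:=\mu}$ holds exactly when there are an $s$ and a splitting $\gamma=\gamma_1+\gamma_2$ with $r\in[\![t]\!]_{\gamma_1,\,x:=\mu+[s]}$ and $s\in[\![u]\!]_{\gamma_2}$; semantically, differential substitution injects \emph{exactly one} extra run $[s]$ of $u$ into the $x$-component, rather than one run per occurrence. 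With this lemma the second equation is again pure unfolding. On the left, $(\mu,r)\in[\![\D(\lambda x.t)\cdot u]\!]_\gamma$ unfolds through the $\D\,\cdot\,$ clause and then the abstraction clause to: there exist $s,\gamma_1,\gamma_2$ with $\gamma=\gamma_1+\gamma_2$, $r\in[\![t]\!]_{\gamma_1,\,x:=\mu+[s]}$ and $s\in[\![u]\!]_{\gamma_2}$. On the right, $(\mu,r)\in[\![\lambda x.((\partial t/\partial x)\cdot u)]\!]_\gamma$ unfolds through the abstraction clause and then the differential substitution lemma to the identical condition, using that $x\notin\mathrm{FV}(u)$ so that the $x$-component of $u$'s environment is empty and the whole of $\mu$ is carried by $t$.

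The main obstacle will be the application case of the differential substitution lemma, which must realise the Leibniz-style rule $\partial((s)r)/\partial x\cdot u=(\partial s/\partial x\cdot u)\,r\;+\;(\D s\cdot(\partial r/\partial x\cdot u))\,r$. Here ``replacing exactly one occurrence of $x$'' either lands in the function $s$ or in one of the (possibly many) copies of the argument $r$, and one must verify that both possibilities are faithfully tracked by the multiset bookkeeping — in particular that the semantic $\D\,\cdot\,$ produced by the second summand matches the ``$+[s]$ in the argument multiset'' of the definition of $[\![\D t\cdot u]\!]$, while the sum of the two summands is absorbed by the union clause for $+$. Once this case is reconciled, the variable, abstraction, $0$ and $+$ cases are routine, and checking that the resulting next states are related is immediate, exactly as in the two preceding propositions.
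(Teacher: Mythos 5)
The paper itself does not spell out a proof of this proposition: it only remarks that, as with Proposition~\ref{prop:red_inv}, the argument is an easy but tedious induction, and relegates the details to an external note. Your proposal is correct and carries out precisely that kind of induction in the standard way --- reducing each equation to an ordinary, respectively differential, substitution lemma proved by induction on $t$, and correctly isolating the Leibniz/application case (where the extra resource $[s]$ is routed either to the function part or to exactly one copy of the argument, the two alternatives matching the two summands of the Leibniz rule and being merged by the union interpreting $+$) as the only delicate piece of multiset bookkeeping.
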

Just like for Proposition~\ref{prop:red_inv}, the proof is quite easy but
tedious. The interested reader can find it at
\texttt{http://iml.univ-mrs.fr/$\sim$hyvernat/academics.html}.


\section*{Conclusion} 

Technically speaking, this work is not very different from \cite{ISLL}, which
is itself quite close to \cite{denotPT}. The main reasons for producing it
are:
\begin{itemize}
  \item first, it shows that we can give a computational content to the notion
    of simulation if we do not try to interpret all of linear logic;

  \item second, it shows that some of the additional structure of interaction
    systems and simulation does have a logical significance. We showed that by
    interpreting the differential $\lambda$-calculus.
\end{itemize}
Even if we haven't done it formally, it is quite easy to extend the model to
full intuitionistic linear logic while keeping the computational content
of simulations. To define the additive, we use the definition of $\Plus$ from
\cite{ISLL}.

\smallbreak
It is in principle possible to formalise all the above in a proof assistant
(Agda \cite{Agda} or Coq \cite{Coq} come to mind).\footnote{One needs to be
careful to be able to deal with the notion of equivalence classes used in the
definition of $!w$. The idea is to use interaction systems on ``setoids'',
where the equivalence relation is a simulation...} From such a system, one
could extract the simulations. For example, a term of type $T \rightarrow T'$ would
give an algorithm simulating many synchronous occurrences of $T$ by a single
occurrence of $T'$.

It is however difficult to apply this to obtain real-life simulations. The
problem is that we only get ``purely logical'' simulations. Simulations of
interest for application rely heavily on the different interaction systems
used. One way to get more interesting simulations (from a practical point of
view) might be to use constant interaction systems (booleans, natural numbers,
or more practical ones like stacks, memory cells, \etc.) as ground types,
together with specific simulations (the values true and false, successor
function, or more practical simulations) as inhabitant of specific types.

\medbreak
In pretty much the same way as \cite{ISLL} makes $\Int$ into a denotational
model for classical linear logic, we can make interaction systems into a
denotational model for ``classical differential linear logic'': differential
interaction nets \cite{diffNets}. This system doesn't make much sense
logically speaking, but seems to enjoy relationship with process calculi.
This is an encouraging direction of research.




\end{document}